\pgfplotsset{width=7cm,compat=1.3}
\newcommand{\nop}[1]{} 
\newcommand{\shorten}[1]{}
\newtheorem{proposition}{Proposition}
\newtheorem{theorem}{Theorem}
\newtheorem{definition}{Definition}
\newtheorem{remark}{Remark}
\newtheorem{corollary}{Corollary}
\newtheorem{example}{Example}
\newcommand{\signed}%
    {{\unskip\nobreak\hfill\penalty50
      \hskip2em\hbox{}\nobreak\hfil $\blacksquare$
      \parfillskip=0pt \finalhyphendemerits=0 \par}}
\begin{document}

\title{Storage-Repair Bandwidth Trade-off for Wireless Caching with Partial Failure and Broadcast Repair}

\author{
	\IEEEauthorblockN{Nitish Mital\IEEEauthorrefmark{1}, Katina Kralevska\IEEEauthorrefmark{2}, Deniz G\"{u}nd\"{u}z\IEEEauthorrefmark{1} and Cong Ling\IEEEauthorrefmark{1}\\ 
	\IEEEauthorblockA{\IEEEauthorrefmark{1}Department of Electrical Electronics Engineering, Imperial College London}
	\IEEEauthorblockA{\IEEEauthorrefmark{2}Dep. of Information Security and Communication Technology, NTNU, Norwegian University of Science and Technology}
	Email: \{n.mital,d.gunduz,c.ling\}@imperial.ac.uk, katinak@ntnu.no}
}

\maketitle

\begin{abstract}	
Repair of multiple partially failed cache nodes is studied in a distributed wireless content caching system, where $r$ out of a total of $n$ cache nodes lose part of their cached data. Broadcast repair of failed cache contents at the network edge is studied; that is, the surviving cache nodes transmit broadcast messages to the failed ones, which are then used, together with the surviving data in their local cache memories, to recover the lost content. The trade-off between the storage capacity and the repair bandwidth is derived. It is shown that utilizing the broadcast nature of the wireless medium and the surviving cache contents at partially failed nodes significantly reduces the required repair bandwidth per node. 
\end{abstract}

%
\IEEEpeerreviewmaketitle

\section{Introduction}
Caching popular contents closer to end-users, particularly in the available storage space at the wireless network edge, is attracting a lot of attention in the recent years as a promising method to alleviate the increasing burden on the backhaul links of wireless access points, e.g., small cell base stations, and to improve the quality of service, particularly by reducing the latency \cite{6495773, 8114221}, or energy consumption \cite{7438743}. The literature on distributed coded caching systems focuses mostly on the code design or resource allocation for efficient storage of popular contents, assuming reliable cache nodes. However, storage devices are often unreliable and prone to failures; thus, efficient repair  techniques that guarantee continuous data availability are essential for a successful implementation of distributed caching and content delivery techniques in practice. 

Maximum distance separable (MDS) codes are typically used for distributed caching of contents at multiple access points \cite{6495773, 8114221, DBLP:journals/corr/abs-1712-00649}. MDS codes provide flexibility for storage so that users with different connectivity or mobility patterns can download a file from only a subset of the access points. In particular, an $(n, k)$ MDS code encodes a file of size $M$ bits by splitting it into $k$ equal-size fragments and encoding them into $n$ fragments which are stored at $n$ cache nodes. The original file can be recovered by accessing any $k$ out of $n$ fragments from $k$ distinct access points. When some nodes partially or fully fail, their cache contents need to be regenerated to be able to continue serving users. An important objective of edge caching in wireless networks is to reduce the backhaul link loads; therefore, we will consider \textit{cache recovery at the edge}; that is, rather than updating the failed cache contents from a central server through backhaul links, the failed cache contents are regenerated with the help of surviving cache nodes. The total amount of data transferred from the surviving nodes to repair the failed nodes is called the \textit{repair bandwidth}. Traditional MDS codes have high storage efficiency, but their repair bandwidth is large \cite{5550492}. The data of one node is repaired by accessing and transferring data from $k$ nodes, i.e., by recovering the whole content library.

Dimakis et al. showed in  \cite{5550492} that there is a fundamental trade-off between the storage and repair bandwidth by mapping the repair problem in a distributed storage system to a multicasting problem over an information flow graph. The analysis focuses on a single node repair; that is, losing one of the nodes triggers the repair process. Regenerating codes achieve any point on the optimal trade-off curve, while minimum-storage regenerating (MSR) codes and minimum-bandwidth regenerating (MBR) codes operate on the two extremes of this trade-off curve.


It was observed in \cite{5402494} that multiple node repair; that is, the repair process starts only after $r$ nodes fail, is more efficient in terms of the repair bandwidth per node, compared to repairing each node as it fails. Additionally, we will consider the \textit{partial repair} problem, studied in \cite{7000553}, in which a cache node loses only a part of its contents, and the remainder of the cache contents should be used along with the transmissions from the surviving nodes to recover its original content, thus further reducing the repair bandwidth. In \cite{5978920} and \cite{6565355}, the authors introduce cooperative regenerating codes, which repair multiple failures cooperatively by allowing each of the $r$ nodes being repaired to collect data from the $n-r$ non-failed nodes, and then to cooperate with the other $r-1$ nodes being repaired. Instead, similarly to \cite{7000553}, we will consider broadcast repair; that is, transmissions from each node are received in an error-free manner by all the other nodes. In summary, we will study the broadcast repair of partially failed cache nodes. The storage-repair bandwidth trade-off for the repair of multiple fully failed nodes is investigated in \cite{7459908}.

In \cite{7000553}, the authors derive a lower bound on the number of packet transmissions at the MSR point for error-free partial broadcast repair, and provide an explicit code construction for a special case. The information flow graph construction in \cite{7000553} does not capture the relation between storage capacity per node and the repair bandwidth, thus focusing only on one of the extremal points on the storage-repair bandwidth trade-off curve, corresponding to the MSR point. In this paper, we study the entire optimal trade-off curve. Through specific examples we show that the MSR point from \cite{7000553}, where each node stores $\sfrac{M}{k}$ bits, is not feasible with a finite repair bandwidth for all cases. We also present a code construction for a special case to demonstrate the achievability of the proposed bound using linear network coding.

In \cite{7541450}, the authors consider a centralized model of multiple-node failures that is equivalent to the broadcast repair model in \cite{7459908}. Reference \cite{8277979} investigates the storage-repair bandwidth trade-off for clustered storage networks, where multiple nodes within a cluster fail. This is close to the partial failure model that we consider since each cluster could model multiple memory units within a node, and multiple memory units failing is equivalent to partial failure in a node. However, we consider partial failures at multiple nodes, i.e., multiple clusters, and broadcast transmissions from the non-failed nodes.

\textbf{\textit{Notations}}. For two integers $i<j$, we denote the set $\{i,i+1,\ldots, j \}$ by $[i:j]$, while the set $[1:j]$ is denoted by $[j]$. Sets are denoted with the calligraphic font. Vectors and matrices are denoted with a bold font. 

\section{System Model} \label{nm}
Consider a wireless caching system where $n$ nodes, each with storage capacity $\alpha$ bits, store a file of size $M$ bits. 
We index these storage nodes by the set $\cal N$ $\triangleq \{ 1, \ldots, n \}$. The nodes are fully connected by a wireless broadcast medium and use orthogonal channels for data transmission.

We consider a scenario in which a portion of the stored bits in the storage nodes is subject to being lost. We refer to these nodes as the \textit{faulty nodes} and to the nodes that do not experience any losses as the \textit{complete nodes}. We assume that the repair occurs in rounds, where a repair round gets initiated when $r$ nodes experience partial failures of $\alpha - \alpha_{1}$ bits, where $\alpha_1 \triangleq \rho \alpha , \rho \in [0,1]$. Thus, a single repair round repairs $r$ faulty nodes. There is no loss during a repair round. During a repair round, the lost bits in the faulty nodes are repaired with the help of transmitted bits from the complete nodes and the remaining bits that have not been lost in each of the faulty nodes. In general, the repair is functional, i.e., the repaired portion may not be the same as the original portion, but it satisfies the same property that any $k$ nodes are sufficient to reconstruct the whole file.

\subsection{Information flow graph}

The repair dynamics of the network can be represented by an information flow graph that evolves in time. See Fig. \ref{fig:flow1} and \ref{fig:flow2} for illustrations. It is a directed acyclic graph consisting of six types of nodes: a single source node $S$, storage nodes $x_{in}^i, x_{mid}^i, x_{out}^i$, helpers $h_i$, and a data collector $DC$. 
Each complete storage node $x^i, i \in  [n]$, is represented by two vertices: an input vertex $x_{in}^i$ and an output vertex $x_{out}^i$ that are connected by a directed edge $x_{in}^i \rightarrow x_{out}^i$ with capacity $\alpha$.
A faulty node is represented by four vertices: an input vertex $x_{in}^i$, an intermediate vertex $x^{i}_{mid}$ that is connected to $x_{in}^i$ by a directed edge $x_{in}^i \rightarrow x_{mid}^i$ of capacity $\alpha$, an output vertex $x_{out}^i$ that is connected to $x_{mid}^i$ by a directed edge $x_{mid}^i \rightarrow x_{out}^i$ of capacity $\alpha_1$, and a failed vertex $x_f^i$ that is connected to $x_{mid}^i$ by a directed edge $x_{mid}^i \rightarrow x_{f}^i$ of capacity $\alpha-\alpha_1$. The failed vertex represents the corrupted portion of the data in the storage node. 

Each vertex in the graph at any given time has two modes, active or inactive, depending on its availability. At the initial time, the source node $S$ is active and it transmits data to $n$ storage nodes such that the $DC$ can retrieve the file from any $k$ nodes. This is modeled by adding an edge from $S$ to all input vertices of the storage nodes, $S \rightarrow x_{in}^i, i \in [n]$, with capacity $\infty$\footnote{Note that adding an edge with capacity $\infty$ means that all the information in the node sending the data is available in the input vertices of the nodes receiving the data.}. From this point onwards, the source node becomes inactive, and the storage nodes become active.

When $r$ nodes experience partial failure of $\alpha - \alpha_1$ bits each, in the $s$-th round, the repair process is triggered and $r$ newcomers join the system. Note that a newcomer represents the corresponding node being repaired. A newcomer $x^i$ where $i = sn+j, j\in [n]$, represents the node $x^j$ after the $s$-th round. The lost data is regenerated at the newcomers by receiving functions of the stored data from the $n-r$ complete storage nodes through the helper nodes. 
The $n-r$ complete storage nodes are connected to the corresponding helper nodes with a directed edge $x_{out}^i \rightarrow h^i$ of capacity $\beta$, which denotes the number of bits broadcasted by $x^i$. Each helper node $h^i$ is connected with infinite capacity links to all the newcomers. This represents the broadcast nature of the transmission medium.
\begin{definition} 
The repair bandwidth $\gamma=(n-r)\beta$ is defined as the total number of bits the complete storage nodes broadcast in a repair round.
\end{definition}
We model a newcomer with two vertices $x_{in}^i$ and $x_{out}^i$ and a directed edge $x_{in}^i \rightarrow x_{out}^i$ with capacity $\alpha$. The newcomer $x^i, i={n+(s-1)r+j}$, uses the $\alpha_1$ bits from the corresponding node being repaired. This is captured in the flow graph by edges with capacity $\alpha_1, x_{mid}^i \rightarrow x_{out}^i$, followed by the edges with infinite capacity between the output vertices of the node being repaired and the newcomers.

A data collector corresponds to a request to reconstruct the file. Data collectors connect to any subset of $k$ active nodes and retrieve all the stored data in these nodes, represented with edges with infinite capacity from the active nodes to the $DC$.

\begin{figure*}
		\centering
		\includegraphics[width=0.7\textwidth]{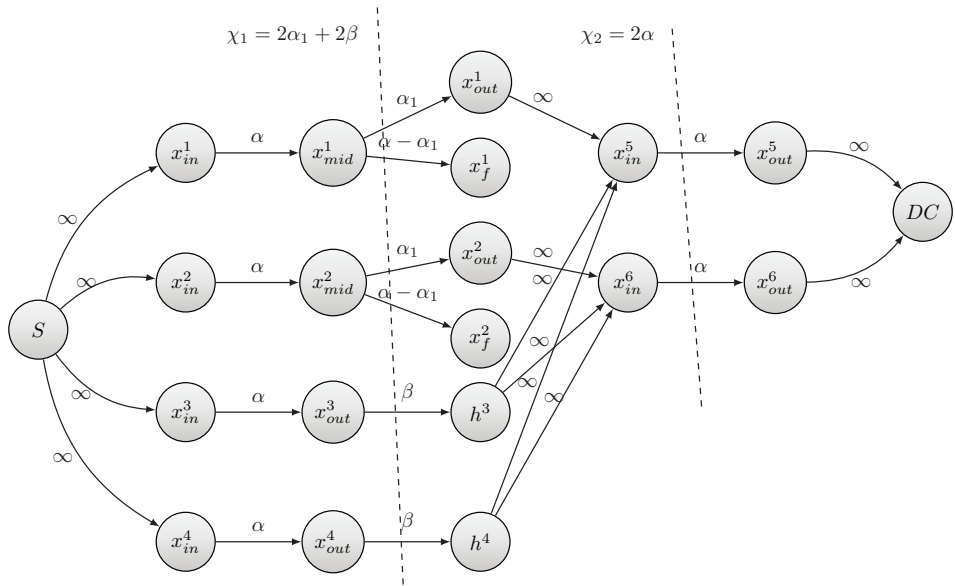} 
        \caption{Information flow graph $\cal G$ with $n=4, k=2, r=2$, one repair round and two cuts $\chi_1$ and $\chi_2$.}
		\label{fig:flow1}
	\end{figure*}

\begin{figure*}
		\centering
		\includegraphics[width=0.7\textwidth]{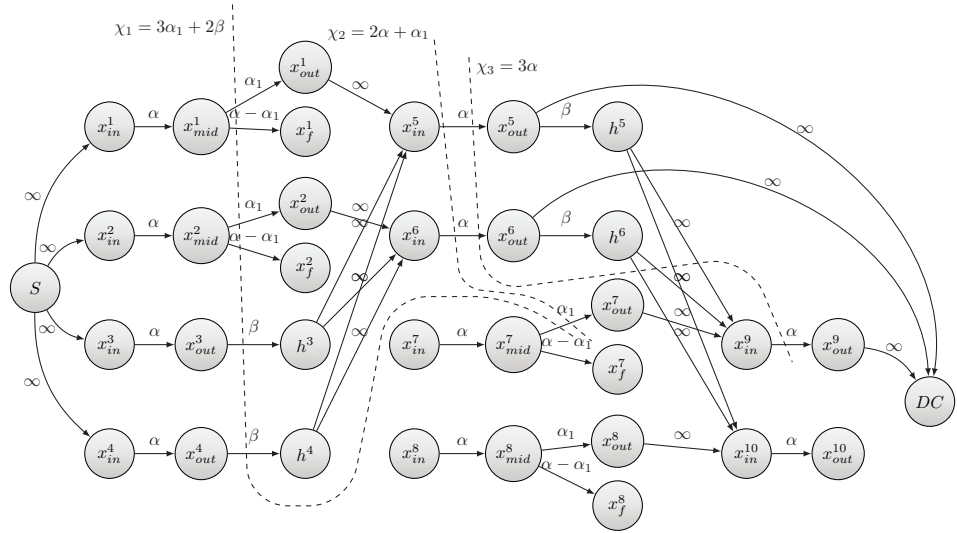} 
        \caption{Information flow graph $\cal G$ with $n=4, k=3, r=2$, two repair rounds and three cuts $\chi_1, \chi_2$ and $\chi_3$.}
		\label{fig:flow2}
	\end{figure*}
A cut in the information flow graph is a subset of edges such that there is no path from the source node $S$ to the data collector $DC$ that does not go through any of the edges in the cut. We define the capacity of a cut as the sum of its edge capacities, and the \textit{min-cut} of a graph as the minimum capacity among all the cuts.
\begin{proposition} \cite{5550492} \label{prop:1}
Consider any given finite information flow graph $\cal G$, with a finite set of data collectors. If the min-cut separating the source from each data collector is larger than or equal to the file size $M$, then there exists a
linear network code defined over a sufficiently large finite field $\mathbb F$
(whose size depends on the graph size) such that all data collectors can recover the original file. Further, randomized network coding guarantees that all collectors can recover the file with probability that can be driven arbitrarily close to $1$ by increasing the field size.
\end{proposition}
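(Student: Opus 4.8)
The plan is to recognize the statement as an instance of the single-source multicast theorem of network coding and to establish it in two stages: first the existence of a deterministic linear code, then the probabilistic guarantee for randomized coding. Since $\mathcal{G}$ is a finite directed acyclic graph with the single source $S$ and the data collectors as sinks, and since \emph{every} data collector must recover the \emph{same} file, this is exactly a multicast problem. First I would normalize the graph by subdividing each edge of integer capacity $c$ into $c$ parallel unit-capacity edges, so that each edge carries a single symbol of $\mathbb{F}$ and the source injects a length-$M$ vector. The hypothesis that the min-cut separating $S$ from each data collector is at least $M$ then says, via the max-flow min-cut theorem, that each source--sink pair admits a flow of value $M$ (equivalently, $M$ edge-disjoint paths). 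By the fundamental multicast theorem of Ahlswede, Cai, Li and Yeung, the achievable common multicast rate equals the minimum of these per-sink min-cuts, so rate $M$ is simultaneously achievable at all data collectors.

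Next, to produce an actual \emph{linear} code I would adopt the algebraic framework of Koetter and M\'{e}dard. I assign a formal variable to each local encoding coefficient at every node; then, because $\mathcal{G}$ is acyclic, the symbols arriving at any fixed data collector are a linear image of the source vector through a transfer matrix whose entries are polynomials in these variables. A given collector can decode if and only if its transfer matrix is invertible, i.e. its determinant is a nonzero polynomial. Using the existence of a value-$M$ flow to each sink, I would exhibit, for each collector, an assignment of the variables that realizes $M$ disjoint routing paths and hence makes that determinant nonvanishing; this shows each per-sink determinant is not identically zero. The product of these determinants over the finite set of data collectors is therefore a nonzero polynomial, and any point at which it is nonzero gives a single coefficient assignment that makes every transfer matrix invertible at once, which is precisely the desired deterministic linear code. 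Such a point exists whenever the field $\mathbb{F}$ is large enough relative to the degree of this product, which depends only on the graph size and the number of sinks.

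For the randomized statement I would apply the Schwartz--Zippel lemma to the same nonzero product polynomial. Drawing each local coefficient independently and uniformly from $\mathbb{F}_q$, the probability that the product vanishes is at most $d/q$, where $d$ is the total degree of the product, bounded by a polynomial in the number of edges and the number of data collectors. Hence the probability that some collector fails to decode is $O(1/q)$, which tends to $0$ as the field size $q$ grows, giving the claimed "arbitrarily close to $1$" guarantee.

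The hard part will be the simultaneity requirement: one fixed code must serve all data collectors at the same time. The crux is the second paragraph, namely arguing that each per-sink determinant polynomial is \emph{not identically zero} (this is where the min-cut $\ge M$ hypothesis and the acyclicity of $\mathcal{G}$ are essential, via an explicit flow that witnesses a nonvanishing monomial), and then bundling these into a single product polynomial so that one common good coefficient vector is guaranteed to exist. Once that nonvanishing product is in hand, both the deterministic existence argument and the Schwartz--Zippel step are routine.
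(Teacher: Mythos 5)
Your proposal is correct and takes essentially the same route as the paper's source for this result: the paper states Proposition~\ref{prop:1} by citation to \cite{5550492} without proof, and the underlying argument there is exactly what you reconstruct — the Ahlswede--Cai--Li--Yeung multicast theorem realized through the Koetter--M\'{e}dard transfer-matrix framework, with a common nonvanishing determinant product for the deterministic code and Schwartz--Zippel for the randomized guarantee. The only technical adjustment needed is in your normalization step: the information flow graphs here contain edges of capacity $\infty$ (source-to-node, helper-to-newcomer, and node-to-$DC$ edges), which must first be capped at $M$ (harmless, since the relevant min-cuts are compared against $M$) before subdividing into unit-capacity parallel edges.
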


Following Proposition \ref{prop:1}, for the information flow graph construction in this paper, we find the minimum cut over all possible failure combinations. We enumerate cuts as $\chi_1, \chi_2, \ldots$ (see Fig. \ref{fig:flow1}, \ref{fig:flow2}). In Section \ref{tradeoff}, we demonstrate how to find the min-cut for a specific example.

\section{Storage-Bandwidth Trade-off for Partial Repair} \label{sec:tradeoff}

Consider the scenario illustrated in Fig. \ref{fig:flow1}, where $n=4, k=2$ and $r=2$. The capacity of cut $\chi_1$ is $2\alpha_1 + 2\beta$, while the capacity of cut $\chi_2$ is $2\alpha$. Then the min-cut is $\min\{2\alpha_1 + 2\beta, 2\alpha\}$. From Proposition \ref{prop:1}, to ensure that the file can be reconstructed by the data collector, $\min\{2\alpha_1 + 2\beta, 2\alpha\} \geq M$.

Next we consider a scenario represented in Fig. \ref{fig:flow2} with $n=4, k=3$ and $r=2$, where two repair rounds are required to determine the min-cut. The number of repair rounds is determined by ensuring that each of the $k$ nodes serving the $DC$ go through at least one repair round, so that all the different capacity edges occur at least once in the path, from $S$ to $DC$, through each node. Therefore, the minimum number of repair rounds required is $\lceil \sfrac{k}{r} \rceil $. The min-cut is then given by $\min \{3\alpha_1 +2\beta, 2\alpha + \alpha_1 \}$, and the sufficient condition for the reconstruction of the file by the $DC$ is $\min \{3\alpha_1 +2\beta, 2\alpha + \alpha_1 \} \geq M$.


For each set of parameters $(n, k, \gamma, \alpha, r, \rho)$, there is a family of information flow graphs, each of which corresponds to a particular evolution of node failures/repairs. We denote this family of directed acyclic graphs by $\cal G$$(n, k, \gamma, \alpha, r, \rho)$. An $(n, k, \gamma, \alpha, r, \rho)$ tuple is feasible, if a code with storage $\alpha$ and repair bandwidth $\gamma$ exists.

\begin{theorem}\label{theorem 1}
For any $\alpha \geq \alpha^{*}(n,k,\gamma, r, \rho)$, the points $(n,k, \gamma, \alpha, r, \rho)$ are feasible, and linear network codes suffice to achieve them. It is information theoretically impossible to achieve points with $\alpha < \alpha^{*}(n,k,\gamma, r, \rho)$. If $r$ divides $k$, the threshold function $\alpha^{*}(n,k,\gamma, r, \rho)$ is given by:
\begin{align}
\alpha^{*}(n,k,\gamma, r, \rho)=\left\{
                \begin{array}{ll}
                  \frac{M}{k} \hspace{15.5mm} \gamma \in \left[ f(0), \infty \right) \\
                  \frac{M - g(i)\gamma}{k-ir(1-\rho)} \hspace{5mm} \gamma \in \left[ f(i), f(i-1)\right] 
                \end{array}
              \right.
\end{align}
where, for $i={1,2,\ldots , \frac{k}{r}-1},$
\begin{align}
f(i)&\triangleq\frac{2M(1-\rho)(n-r)}{(2k-r(i+1)(1-\rho))i+\frac{2k}{r}(n-k)},\\
g(i)&\triangleq \frac{1}{2}\left(2n-2k-r+ir\right)\frac{ir}{n-r}.
\end{align}
\end{theorem}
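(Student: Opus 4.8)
The plan is to prove both halves of the threshold characterization through the cut-set bound of Proposition~\ref{prop:1}. For the converse I will exhibit a graph in $\mathcal G(n,k,\gamma,\alpha,r,\rho)$ whose min-cut falls below $M$ whenever $\alpha<\alpha^{*}$, so reconstruction is information-theoretically impossible. For achievability I will show that $\alpha\geq\alpha^{*}$ forces the min-cut of \emph{every} graph in the family (and every data collector) to be at least $M$, whereupon Proposition~\ref{prop:1} supplies a linear network code over a large enough field.

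For the converse I would first fix the worst-case failure/repair evolution. Since $r\mid k$, exactly $k/r$ rounds are needed so that each of the $k$ nodes feeding the $DC$ has passed through a repair (so its reduced-capacity edges appear on every $S$-to-$DC$ path). I take the rolling pattern in which the $DC$ connects to the $r$ newcomers of each of the $k/r$ consecutive rounds, and every newcomer is reused as a helper in all later rounds, maximizing the overlap between the helper set and the $DC$'s nodes. To evaluate the min-cut I process the $k/r$ rounds and, for each output vertex reaching the $DC$, choose between cutting its storage edge (cost $\alpha$) or cutting its surviving-data edge together with the broadcast edges feeding its repair (cost $\alpha_1$ plus $\beta$ per helper not already on the sink side of the cut). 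Placing the $i$ latest rounds on the repair-cut side and the remaining $k/r-i$ rounds on the storage-cut side gives a storage contribution $(k-ir)\alpha+ir\alpha_1=(k-ir(1-\rho))\alpha$, using $\alpha_1=\rho\alpha$. The broadcast contribution is the sum, over the $i$ repair-cut rounds, of the helper broadcast edges that cross the cut; because a node repaired in an earlier round already lies on the sink side, this count decreases round by round, and the resulting arithmetic progression collapses to $g(i)\gamma$ with $\gamma=(n-r)\beta$. Minimizing over the split index $i$ yields the min-cut $\min_{i}\{(k-ir(1-\rho))\alpha+g(i)\gamma\}$.

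Turning the cut-set inequality around, requiring this minimum to be at least $M$ is equivalent to $\alpha\geq(M-g(i)\gamma)/(k-ir(1-\rho))$ for every admissible $i$, so $\alpha^{*}=\max_{i}(M-g(i)\gamma)/(k-ir(1-\rho))$. Each term is affine and decreasing in $\gamma$, so this maximum is a convex, piecewise-linear function of $\gamma$; the maximizing index switches where two consecutive terms coincide, and solving $(M-g(i)\gamma)/(k-ir(1-\rho))=(M-g(i-1)\gamma)/(k-(i-1)r(1-\rho))$ for $\gamma$ produces the breakpoints $f(i)$, with the flat piece $\alpha^{*}=M/k$ on $[f(0),\infty)$ coming from $i=0$ and $g(0)=0$. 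For achievability I then argue that the rolling pattern is extremal, i.e.\ no other evolution in $\mathcal G(n,k,\gamma,\alpha,r,\rho)$ and no other data collector yields a smaller min-cut; this reduces to verifying that any deviation (fewer reused helpers, or a $DC$ touching non-consecutive rounds) can only raise one of the crossing-edge counts. Once the global min-cut equals the expression above, $\alpha\geq\alpha^{*}$ makes it $\geq M$ everywhere and Proposition~\ref{prop:1} delivers the linear code.

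I expect the crux to lie in the two coupled combinatorial optimizations inside the min-cut step: assigning each output vertex to a storage-cut or a repair-cut, and, simultaneously, counting exactly how many broadcast edges cross the cut once earlier newcomers are reused as helpers. Pinning down $g(i)$ hinges entirely on this bookkeeping --- in particular on how the shared broadcast edges $x_{out}^{i}\to h^{i}$ interact with helper nodes already on the sink side --- and on proving that the rolling evolution is genuinely the minimizer over all graphs in the family, rather than merely one convenient candidate.
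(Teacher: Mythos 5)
Your proposal follows essentially the same route as the paper's proof: both apply Proposition~\ref{prop:1} to the information-flow-graph family, take the rolling failure/repair pattern (the $DC$ connected to the newcomers of $\sfrac{k}{r}$ successive rounds, with earlier newcomers reused as helpers) as the extremal graph, evaluate its min-cut by choosing per round between the storage cut of cost $r\alpha$ and the repair cut of cost $r\alpha_1$ plus the crossing broadcast edges (whose count drops by $r$ each round because earlier newcomers already lie on the sink side), and then invert the resulting piecewise-linear capacity to get $\alpha^*$ with breakpoints $f(i)$ and offsets $g(i)$, exactly as the paper does. The only cosmetic difference is that you express the threshold as a maximum of affine functions of $\gamma$ and find breakpoints by equating consecutive pieces, whereas the paper inverts the piecewise-linear cut capacity $C(\alpha)$; these are equivalent computations.
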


\begin{proof}
To apply Proposition \ref{prop:1}, consider an information flow graph $G$ that enumerates all possible failure/repair patterns and all possible data collectors when the number of failures/repairs is bounded by $r$. We analyze the connectivity in the information flow graph to find the minimum repair bandwidth. Let the graph have $n$ initial nodes connected directly to the source and obtain $\alpha$ bits, while $r$ newcomers join the graph by connecting to $n-r$ complete nodes, obtaining $\beta$ bits from each. Any data collector DC that connects to a $k$-subset of ``out-nodes'' of $G$ must satisfy
\begin{align}\label{bound}
    C=\text{mincut(S,DC)}\geq \sum_{s=0}^{\sfrac{k}{r}}\min \{ (r\alpha_1 + (n-rs)\beta, r\alpha) \}.
\end{align}
First, we show that there exists an information flow graph $G^\prime$ where the bound (\ref{bound}) is matched with equality. The active nodes before the $s$-th repair round is triggered are labelled as $\mathcal{R}_s^{act}=\{(s-1)n+1, \ldots , sn\}$. Consider that in the $s$-th  repair round, the nodes $(s-1)n+(s-1)r+1, \ldots ,(s-1)n+sr$ are repaired. Denote the set of labels of the newcomers in the $s$-th round by $\mathcal{R}_s^{new}=\{ sn+(s-1)r+1, \ldots, sn+sr \}$, which represent the repaired nodes. The complete nodes are copied into the next round and labeled as $\mathcal{R}_s^{comp}=\{\{sn+i\}: i\in [n]\setminus \mathcal{R}_s^{new}\}$. The newcomers and the copied complete nodes together form the set of active nodes for the next repair round, i.e., $\mathcal{R}_{s+1}^{act}= \mathcal{R}_s^{new} \cup \mathcal{R}_s^{comp}$. Consider that the data collector connects to the nodes with indices $\mathcal{R}_s^{new}, s=[\sfrac{k}{r}]$. Consider a cut $(U,\bar{U})$ defined as follows. For the $s$-th repair round, if $r\alpha \leq r\alpha_1 + (n-sr)\beta$, then we include the nodes $x_{out}^{\mathcal{R}_s}$ in $\bar{U}$; otherwise, we include $x_{in}^{\mathcal{R}_s},x_{out}^{\mathcal{R}_s}$ in $\bar{U}$. Then this cut $(U,\bar{U})$ achieves (\ref{bound}) with equality. \\

Second, we show that any information flow graph has at least the minimum cut capacity of (\ref{bound}). There is a topological order of the nodes in an information flow graph by which any node $\nu_i$ having incoming edges only from nodes in $\bar{U}$, also belongs to $\bar{U}$, and an edge from $\nu_i$ to $\nu_j$ implies $i<j$. The min-cut could either be behind the helper nodes, i.e., all helper nodes (and the newcomers, whose parent nodes are the helper nodes) lie in $\bar{U}$, as illustrated in Fig. \ref{fig:flow1} by the cut $\chi_1$; or the min-cut could be after the helper nodes, i.e., all helper nodes (and the in-vertices of the newcomers) lie in $U$ while the out-vertices of the newcomers lie in $\bar{U}$. In the $s$-th round, the capacity contribution of the first $(s-1)r$ complete nodes will be zero, since their parent nodes from the previous rounds are in $\bar{U}$. If the min-cut is behind the helper node of even one node being repaired, it is easy to see that the min-cut will be behind the helper nodes of the remaining nodes being repaired, since the capacity contribution from the remaining nodes being repaired must be chosen as the minimum of $\alpha_1$ and $\alpha$, and $\alpha_1 \leq \alpha$ by definition. Thus it is shown that the min-cut cuts a repair round either behind the helper nodes or after the helper nodes of all the nodes being repaired. Thus the expression for the min-cut capacity can be determined in the following way. 

The contribution of the first repair round to the minimum cut capacity is $\min \{ r\alpha_1 + (n-r)\beta , r\alpha \}$, where the first term denotes the capacity contribution when the min-cut lies behind the helper nodes, and the second term denotes the contribution when the min-cut lies after the helper nodes.\\
Consider the second repair round. Following a similar line of reasoning, we analyze the different cases:

\begin{itemize}
    \item If the min-cut lies behind the helper nodes, since the capacity contribution from the nodes in $\mathcal{R}_1^{new}$ is zero, the total capacity contribution from the second round will be $r\alpha_1 + (n-2r)\beta$.
    \item If the min-cut lies after the helper nodes, then the capacity contribution will be $r\alpha$. 
\end{itemize}

The number of repair rounds for which we follow a similar procedure is $\sfrac{k}{r}$, since in each repair round there is a new set of $r$ nodes being repaired, and it is sufficient for the min-cut to cut out the $k$ nodes connected to the data collector. \\  
Thus the total capacity of the min-cut is given by Equation (\ref{bound}).

From Proposition \ref{prop:1}, the cut-capacity must be greater than the file size to ensure that the data collector is able to retrieve the file from any $k$ nodes. Therefore, the following must be satisfied for guaranteed file availability.
\begin{align} \label{sufficiency}
    \sum_{s=0}^{\sfrac{k}{r}}\min \{ (r\alpha_1 + (n-rs)\beta, r\alpha) \} \geq M
\end{align}

We are interested in characterizing the achievable trade-offs between the storage $\alpha$ and the repair bandwidth $(n-r)\beta$ for some given $(n,k,\rho)$.  \\
If $r\alpha \leq r\alpha_1 + (n-k)\beta$, then the min-cut passes behind the helper nodes in each of the $\sfrac{k}{r}$ repair rounds; if $ r\alpha_1 + (n-k)\beta \leq r\alpha \leq r\alpha_1 + (n-k+r)\beta$, then the min-cut passes behind the helper nodes for the first $\sfrac{k}{r}-1$ repair rounds, but passes on the right side of the helper nodes in the $\sfrac{k}{r}$-th repair round; and so on. In general, if $r\alpha_1 + (n-k+(s-1)r)\beta \leq r\alpha \leq r\alpha_1 + (n-k+sr)\beta, s \in [\sfrac{k}{r}]$, then the min-cut will cut behind the helper nodes for $s$ repair rounds, and cut after the helper nodes in the remaining $\sfrac{k}{r}-s$ repair rounds. \\

Let $b_s \triangleq \frac{\frac{n-k}{r}+s}{1-\rho}\beta, s=[0:\sfrac{k}{r}]$. The capacity of the min-cut is a piecewise-linear function of $\alpha$ given by
\begin{align}
    C(\alpha)&=\left\{
                \begin{array}{ll}
                k\alpha \hspace{170pt} \alpha \in (0,b_0]\\
                (k-r)\alpha + \left( r\alpha_1 + (n-k)\beta\right) \hspace{57pt} \alpha \in (b_0,b_1]\\
                \vdots \\
                r\alpha + \sum_{i=0}^{\sfrac{k}{r}-2} \left(  r\alpha_1 + (n-k+ir)\beta\right) \hspace{28pt} \alpha \in (b_{\sfrac{k}{r}-2}, b_{\sfrac{k}{r}-1}]\\
                 \sum_{i=0}^{\sfrac{k}{r}-1} \left(  r\alpha_1 + (n-k+ir)\beta\right) \hspace{51.5pt} \alpha \in (b_{\sfrac{k}{r}-1}, \infty]
                \end{array}
                \right. \\
            &= \left\{
                \begin{array}{ll}
                k\alpha \hspace{168pt} \alpha \in (0,b_0]\\
                (k-ir(1-\rho))\alpha +(1-\rho) \sum_{j=0}^{i-1} rb_j \hspace{26pt} \alpha \in (b_{i-1},b_i], i=1,2,\ldots, \sfrac{k}{r}-1 \\
                (1-\rho)\sum_{j=0}^{\sfrac{k}{r}-1} rb_j \hspace{102pt} \alpha \in (b_{\sfrac{k}{r}-1},\infty]
                \end{array}
                \right.    
\end{align}
Note that $C(\alpha)$ is a strictly increasing function. To find the minimum $\alpha$ for a given repair bandwidth $\gamma=(n-r)\beta$ such that $C(\alpha)\geq M$, we let $\alpha^* = C^{-1}(M)$ to obtain
\begin{align}
    \alpha^*&=\left\{
                \begin{array}{ll}
                \frac{M}{k} \hspace{45pt}  M\in (0,kb_0] \\
                \frac{M - g(i)\gamma}{k-ir(1-\rho)} \hspace{15pt} M \in \left[(k-(i-1)(1-\rho)r)b_{i-1}+(1-\rho)\sum_{j=0}^{i-2}rb_j , (k-i(1-\rho)r)b_{i-1}+(1-\rho)\sum_{j=0}^{i-1}rb_j\right] 
                \end{array}
                \right. \\
             &=\left\{
                \begin{array}{ll}
                  \frac{M}{k} \hspace{16.5mm} \gamma \in \left[ f(0), \infty \right) \\
                  \frac{M - g(i)\gamma}{k-ir(1-\rho)} \hspace{6mm} \gamma \in \left[ f(i), f(i-1)\right] 
                \end{array}
              \right. 
\end{align}
\end{proof}

\begin{corollary} \label{corollary1}
The minimum storage point is achieved by the pair $(\alpha_{MSR}, \gamma_{MSR})=\left(\frac{M}{k}, \frac{Mr(n-r)(1-\rho)}{k(n-k)}\right)$.
\end{corollary}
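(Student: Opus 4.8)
The plan is to obtain the minimum-storage point directly as the $i=0$ specialization of the threshold function established in Theorem~\ref{theorem 1}, so that the corollary becomes a short reading-off argument together with one algebraic simplification.

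First I would argue that $\tfrac{M}{k}$ is genuinely the smallest value attained by $\alpha^{*}(n,k,\gamma,r,\rho)$. Since the cut-capacity $C(\alpha)$ is strictly increasing in $\alpha$ and non-decreasing in $\gamma$ (broadcasting additional bits can only enlarge any cut that passes behind the helper nodes), the required storage $\alpha^{*}=C^{-1}(M)$ is non-increasing in $\gamma$, and its infimum is attained in the high-bandwidth regime where the first branch $C(\alpha)=k\alpha$ is active, giving $\alpha^{*}=\tfrac{M}{k}$. Equivalently, one sees from the explicit piecewise form that on every later branch the denominator $k-ir(1-\rho)<k$ forces $\alpha^{*}\geq\tfrac{M}{k}$, with equality only on the first branch. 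Hence the minimum-storage (MSR) point has $\alpha_{MSR}=\tfrac{M}{k}$.

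Second, having fixed $\alpha=\tfrac{M}{k}$, the minimum repair bandwidth sustaining it is the left endpoint of the interval on which the first branch is valid, namely $\gamma_{MSR}=f(0)$, since for any $\gamma<f(0)$ Theorem~\ref{theorem 1} forces $\alpha^{*}>\tfrac{M}{k}$. I would then evaluate $f(0)$ by setting $i=0$ in the definition of $f(i)$: the term $\bigl(2k-r(i+1)(1-\rho)\bigr)i$ vanishes, leaving only $\tfrac{2k}{r}(n-k)$ in the denominator, so that
\begin{align*}
f(0)=\frac{2M(1-\rho)(n-r)}{\tfrac{2k}{r}(n-k)}=\frac{Mr(n-r)(1-\rho)}{k(n-k)},
\end{align*}
which is exactly the claimed $\gamma_{MSR}$.

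I do not anticipate a serious obstacle, as the statement is a direct specialization of Theorem~\ref{theorem 1} at $i=0$. The only point needing a line of justification is the monotonicity/continuity claim confirming that $\tfrac{M}{k}$ is truly the minimum storage and that $f(0)$ is the smallest bandwidth attaining it; this follows from the piecewise-linear structure of $C(\alpha)$ already derived, in particular the continuity of $\alpha^{*}$ at $\gamma=f(0)$, where the $i=0$ and $i=1$ branches must agree.
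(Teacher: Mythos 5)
Your proposal is correct and follows essentially the same route as the paper, which states Corollary~\ref{corollary1} without a separate proof as an immediate specialization of Theorem~\ref{theorem 1}: $\alpha_{MSR}=\frac{M}{k}$ is the first branch of the threshold function, and $\gamma_{MSR}=f(0)=\frac{Mr(n-r)(1-\rho)}{k(n-k)}$ is the left endpoint of its validity interval, computed exactly as you did. One minor caution: your ``equivalently'' side-remark that the denominator $k-ir(1-\rho)<k$ alone forces $\alpha^{*}\geq\frac{M}{k}$ on later branches is incomplete, since the numerator $M-g(i)\gamma$ also shrinks; however, your primary argument (monotonicity of $C(\alpha)$ together with $C(\alpha)\leq k\alpha$, so that $C(\alpha)\geq M$ requires $\alpha\geq\frac{M}{k}$) makes that remark redundant and the proof sound.
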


\begin{corollary} \label{corollary2}
The minimum repair bandwidth point is achieved by the pair $(\alpha_{MBR}, \gamma_{MBR}^*)=\left( \frac{M-g'\gamma^*_{MBR}}{k\rho + r(1-\rho)}, \frac{2Mr(n-r)(1-\rho)}{k(2n-k(1-\rho)-r(1+\rho))}\right)$ where $g'=\frac{1}{2}\frac{(k-r)(2n-k-2r)}{n-r}$.
\end{corollary}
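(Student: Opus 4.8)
The plan is to obtain the MBR point as the minimum-bandwidth corner of the piecewise-linear trade-off curve established in Theorem~\ref{theorem 1}, and then to evaluate the closed forms of $f$, $g$, and $\alpha^{*}$ at that corner. Since $C(\alpha)$ is strictly increasing and piecewise-linear in $\alpha$ with breakpoints $b_{0}<b_{1}<\cdots<b_{\sfrac{k}{r}-1}$, its inverse $\alpha^{*}(\gamma)$ is continuous, strictly decreasing, and piecewise-linear with breakpoints at $\gamma=f(i)$ for $i=0,1,\ldots,\sfrac{k}{r}-1$. Because $f(i)$ is decreasing in $i$, the MSR corner sits at $i=0$ (largest $\gamma$, smallest storage $\sfrac{M}{k}$), whereas the minimum repair bandwidth is attained at the opposite extreme $i=\sfrac{k}{r}-1$. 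I therefore set $\gamma^{*}_{MBR}=f(\sfrac{k}{r}-1)$ and read the matching storage off the $i=\sfrac{k}{r}-1$ branch of $\alpha^{*}$.

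First I would simplify $\gamma^{*}_{MBR}$. Substituting $i=\sfrac{k}{r}-1$ gives $r(i+1)=k$ and $ir=k-r$, so the factor $2k-r(i+1)(1-\rho)$ collapses to $k(1+\rho)$, and the two summands in the denominator of $f$ become $k(1+\rho)\tfrac{k-r}{r}$ and $\tfrac{2k}{r}(n-k)$, that is, $\tfrac{k}{r}[(1+\rho)(k-r)+2(n-k)]=\tfrac{k}{r}[2n-k(1-\rho)-r(1+\rho)]$; this yields $\gamma^{*}_{MBR}=\tfrac{2Mr(n-r)(1-\rho)}{k(2n-k(1-\rho)-r(1+\rho))}$. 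For the storage, at $i=\sfrac{k}{r}-1$ the denominator $k-ir(1-\rho)$ of the second branch of $\alpha^{*}$ becomes $k-(k-r)(1-\rho)=k\rho+r(1-\rho)$, while $ir=k-r$ turns $g(i)=\tfrac12(2n-2k-r+ir)\tfrac{ir}{n-r}$ into $\tfrac12\tfrac{(k-r)(2n-k-2r)}{n-r}=g'$. Plugging both into $\alpha^{*}=\tfrac{M-g(i)\gamma}{k-ir(1-\rho)}$ then gives $\alpha_{MBR}=\tfrac{M-g'\gamma^{*}_{MBR}}{k\rho+r(1-\rho)}$, the asserted pair.

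The main obstacle is the justification of the corner location rather than the algebra: I must argue that the minimum feasible repair bandwidth is genuinely $f(\sfrac{k}{r}-1)$ and not some smaller value reached past the last breakpoint. For $\rho=0$ this is clean, since in the all-rounds-behind-the-helpers regime $\alpha>b_{\sfrac{k}{r}-1}$ the min-cut is independent of $\alpha$ and equals $\beta\tfrac{k}{2r}(2n-k-r)$, so the requirement $C\ge M$ forces $\beta\ge\beta_{\min}$ and hence $\gamma\ge f(\sfrac{k}{r}-1)$, with the bound binding at $\alpha=b_{\sfrac{k}{r}-1}$. For $\rho>0$ that same regime instead contributes a slope-$k\rho$ term in $\alpha$, so a smaller $\gamma$ stays feasible only by over-provisioning storage toward the degenerate no-repair point $(\alpha,\gamma)=(\sfrac{M}{k\rho},0)$, at which the surviving fraction alone reconstructs the file; I would argue that this saturation segment is not part of the regenerating trade-off, so the minimum-bandwidth corner of the operational curve is exactly $i=\sfrac{k}{r}-1$. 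Corollary~\ref{corollary1} provides a consistency check at the $i=0$ endpoint, and continuity of $\alpha^{*}$ at $\gamma=f(\sfrac{k}{r}-1)$ confirms that the two branches meet there.
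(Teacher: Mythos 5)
Your proposal is correct and follows essentially the same route as the paper, which states Corollary~\ref{corollary2} as an immediate consequence of Theorem~\ref{theorem 1}: the MBR point is the last breakpoint of the piecewise-linear threshold function, and your substitutions at $i=\sfrac{k}{r}-1$ (namely $2k-r(i+1)(1-\rho)=k(1+\rho)$ giving $\gamma^{*}_{MBR}=f(\sfrac{k}{r}-1)$, $k-ir(1-\rho)=k\rho+r(1-\rho)$, and $g(\sfrac{k}{r}-1)=g'$) all check out. Your closing concern about the corner location for $\rho>0$ is a genuine subtlety the paper glosses over — its two displayed forms of $C(\alpha)$ actually disagree by a $k\rho\alpha$ term in the final regime, so strictly smaller bandwidths are feasible by over-provisioning storage toward $(\sfrac{M}{k\rho},0)$ — and your resolution, treating the MBR point as the saturation corner of the regenerating trade-off rather than a global bandwidth minimum, is exactly the convention the paper implicitly adopts.
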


Minimum-storage regenerating (MSR) and minimum-bandwidth regenerating (MBR) codes attain the points in Corollary \ref{corollary1} and Corollary \ref{corollary2}, respectively. 

\begin{remark}
For $\rho=0$ and $r=1$ , i.e., complete failure of exactly one node, the model is equivalent to that in \cite{5550492}, and the trade-off curve from Theorem \ref{theorem 1} coincides with the trade-off curve in \cite{5550492}. Similarly, for $\rho=0 \text{ and } r>1$, i.e., multiple complete failures, the trade-off curve from Theorem \ref{theorem 1} coincides with the trade-off curve in \cite{7459908}. 
\end{remark}

\begin{theorem}\label{theorem 2}
In the same context as in Theorem \ref{theorem 1}, if $r$ does not divide $k$, let $p\triangleq \lfloor \sfrac{k}{r} \rfloor$ and $k_0\triangleq pr$. Assume $\frac{n-k_0+(z-1)r}{r}\leq \frac{n-k_0 -r}{k-k_0}\leq \frac{n-k_0+zr}{r}$, for some $z\in [p-2]$, or $0 \leq \frac{n-k_0 -r}{k-k_0}\leq \frac{n-k_0}{r}$ for $z=0$. Also define $k^\prime \triangleq k\rho + (1-\rho)k_0$. Then the threshold function $\alpha^{*}(n,k,\gamma, r, \rho)$ is given by:
\begin{align}
\alpha^{*}=\left\{
                \begin{array}{ll}
                  \frac{M - g(i)\gamma}{k-ir(1-\rho)} \hspace{20.2mm} \gamma \in [f(i), f(i-1)] ,\\
                 \hspace{34mm}0\leq i \leq z-1 \\ \\
                  \frac{M - g(z)\gamma}{k-zr(1-\rho)} \hspace{19.5mm} \gamma \in \left[ f^\prime, f(z-1)\right] \\ \\
              \frac{M - [g(z)+\frac{n-k_0-r}{n-r}]\gamma}{k^\prime -zr(1-\rho)} \hspace{6.1mm} \gamma \in \left[ f(z), f^\prime \right] \\ \\
              \frac{M - [g(i)+\frac{n-k_0-r}{n-r}]\gamma}{k^\prime -ir(1-\rho)} \hspace{7mm} i\geq z+1, \\
           \hspace{34.8mm}   \gamma \in \left[ f(i), f(i-1) \right]
                \end{array}
              \right.
\end{align}
where $i={0,1,\ldots , \frac{k}{r}-1}$, and $f, g \text{ and } f^\prime$ are defined as
\begin{align}
f(i)&\triangleq \left\{
\begin{array}{ll}
\infty \hspace{44.7mm} i=-1\\ \\
\frac{2M(1-\rho)(n-r)}{(2k-r(i+1)(1-\rho))i+\frac{2k}{r}(n-k)} \hspace{10.3mm} i\leq z-1 \\ \\
\frac{2M(1-\rho)(n-r)}{(2k^\prime-r(i+1)(1-\rho))i+\frac{2k^\prime (n-k_0)}{r}+ n-k_0-r} \\ \hspace{48.3mm} i\geq z
\end{array}
\right.
\end{align}
\begin{align}
g(i)&\triangleq \frac{1}{2}\left(2n-2k_0-r+ir\right)\frac{ir}{n-r} \\ 
f^\prime &\triangleq 
\left\{
\begin{array}{ll}
\frac{2M(n-r)}{\left[\frac{2(n-k_0)(k-k_0-r)+2r^2}{k-k_0}+(z-1)r\right]z + \frac{2k(n-k_0-r)}{(k-k_0)(1-\rho)}} \\
\hspace{47.3mm} \text{if}\  z>0\\ 
\frac{M(k-k_0)(n-r)(1-\rho)}{k^\prime (n-k_0-r)} \hspace{20mm} \text{if}\  z=0.
\end{array}
\right.
\end{align}

\end{theorem}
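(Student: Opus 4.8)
The plan is to mirror the proof of Theorem \ref{theorem 1}, again invoking Proposition \ref{prop:1} so that the whole problem reduces to computing the worst-case (minimum) min-cut over the information flow graph family $\mathcal G(n,k,\gamma,\alpha,r,\rho)$ and then inverting it in $\alpha$. The only structural novelty when $r\nmid k$ is that the $k$ out-nodes reached by the data collector can no longer be grouped into $\sfrac{k}{r}$ complete rounds: instead there are $p=\lfloor\sfrac{k}{r}\rfloor$ full rounds, each repairing $r$ nodes all of which feed the $DC$, followed by one terminal partial round that still repairs $r$ nodes but of which only $k-k_0$ feed the $DC$. I would first set up exactly this failure/repair pattern as the cut-minimizing graph $G^\prime$, and then re-run the Theorem \ref{theorem 1} argument (topological ordering of the vertices, the ``behind vs.\ after the helper nodes'' dichotomy, and the contiguity of the behind-helper rounds) to conclude that the worst-case cut decomposes additively over the $p+1$ rounds.

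Next I would write down each round's two cut options. A full round with $n-tr$ original (never-repaired) helpers contributes $\min\{r\alpha,\,r\alpha_1+(n-tr)\beta\}$, exactly as before. The partial round is temporally last, so it has $n-(p+1)r=n-k_0-r$ fresh helpers and feeds only $k-k_0$ out-nodes; hence it contributes $\min\{(k-k_0)\alpha,\,(k-k_0)\alpha_1+(n-k_0-r)\beta\}$, where the $\beta$-edges into its helpers are cut irrespective of how many of its newcomers the $DC$ selects. Summing these $p+1$ terms gives the analogue of (\ref{bound}), and substituting $\alpha_1=\rho\alpha$ yields a piecewise-linear $C(\alpha)$.

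The crucial new step is ordering the rounds by the value of $\alpha$ at which each switches from the ``after'' to the ``behind'' option, since $C(\alpha)$ is assembled by peeling the rounds off in that order. Ordering the full rounds, the $s$-th one to go behind has threshold proportional to $\frac{n-k_0+sr}{r}$, while the partial round switches at $\alpha$ proportional to $\frac{n-k_0-r}{k-k_0}$; the hypothesis $\frac{n-k_0+(z-1)r}{r}\le\frac{n-k_0-r}{k-k_0}\le\frac{n-k_0+zr}{r}$ is precisely the statement that the partial round is inserted between the $(z-1)$-th and $z$-th full-round switches. This produces four regimes: (i) only full rounds behind, leading coefficient $k-ir(1-\rho)$ and the same $g(i)$ as in Theorem \ref{theorem 1}; (ii) $z$ full rounds behind with the partial round still ``after'', still $k-zr(1-\rho)$; (iii) the partial round now ``behind'', which replaces $k$ by the effective $k^\prime=k\rho+(1-\rho)k_0$ and gives $k^\prime-zr(1-\rho)$; and (iv) further full rounds behind, giving $k^\prime-ir(1-\rho)$ for $i\ge z+1$. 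Inverting $C(\alpha)=M$ on each regime and re-expressing the $\alpha$-breakpoints as bandwidth thresholds yields $f(i)$ (carrying $n-k$ for $i\le z-1$ and $n-k_0$ for $i\ge z$) together with the isolated breakpoint $f^\prime$ at which the partial round turns over.

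The main obstacle I anticipate is the bookkeeping around $f^\prime$: because the partial round carries $k-k_0$ rather than $r$ out-nodes, its switching threshold does not fit the $f(i)$ pattern and must be computed on its own, and I would have to verify that the $z$-hypothesis guarantees $f(z)\le f^\prime\le f(z-1)$ so that $C(\alpha)$ remains strictly increasing and the four regimes tile $[0,\infty)$ without overlap. Checking continuity of $\alpha^*(\gamma)$ across the $f^\prime$ breakpoint, where the denominator jumps from $k-zr(1-\rho)$ to $k^\prime-zr(1-\rho)$, is the delicate algebraic point; the remaining steps are the same piecewise inversion already carried out for Theorem \ref{theorem 1}.
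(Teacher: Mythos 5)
You should know at the outset that the paper itself contains \emph{no} proof of Theorem \ref{theorem 2} --- it is stated and immediately followed by the ``Special Cases'' section --- so the only meaningful benchmark is the intended extension of the proof of Theorem \ref{theorem 1}, and your proposal is exactly that extension. Your decomposition into $p=\lfloor k/r\rfloor$ full rounds plus one deficient round feeding only $k-k_0$ out-nodes, the per-round contributions $\min\{r\alpha,\,r\alpha_1+(n-tr)\beta\}$ and $\min\{(k-k_0)\alpha,\,(k-k_0)\alpha_1+(n-k_0-r)\beta\}$, the reading of the $z$-hypothesis as sandwiching the deficient round's switching threshold between the $(z)$-th and $(z+1)$-th full-round switches, the emergence of $k'=k\rho+(1-\rho)k_0$ once the deficient round goes ``behind,'' and the isolated breakpoint $f'$ all reproduce the structure of the stated threshold function. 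Your planned continuity check at $f'$ is also exactly where the payoff is: carrying it out (e.g., for $z=0$, matching $M/k$ against $\frac{M-\frac{n-k_0-r}{n-r}\gamma}{k'}$) pins $f'$ to $\frac{M(k-k_0)(n-r)(1-\rho)}{k(n-k_0-r)}$, with $k$ rather than $k'$ in the denominator, and direct examples (e.g., $n=6,k=3,r=2,\rho=\tfrac12$) confirm this; so your verification step would even surface what appear to be algebraic slips in the stated $f'$ and in the factor $\frac{ir}{n-r}$ of $g(i)$.

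There is, however, one genuine gap, and it is precisely the step that is new when $r\nmid k$: you fix \emph{by fiat} that the cut-minimizing configuration places the deficient round last. In Theorem \ref{theorem 1} there is nothing to decide, since all $k/r$ groups are identical; here the temporal placement of the group of size $k-k_0$ is a real degree of freedom over which the minimum must be taken, and your round-additive accounting cannot settle it --- worse, it points the wrong way. Writing each round's contribution as $c_s\alpha-\max\{0,\,c_s(1-\rho)\alpha-(n-sr)\beta\}$ and applying a rearrangement/majorization argument to the sum shows that pairing the \emph{small} group with the round having the \emph{most} fresh helpers (i.e., putting the deficient round first) yields a weakly smaller total, which would contradict the theorem's formula. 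The resolution is that this naive accounting is not valid for the partial-first pattern: the $r-(k-k_0)$ newcomers of an early deficient round that the $DC$ does \emph{not} select still act as helpers in later rounds, and any cut that goes behind the helpers of those later rounds must additionally pay $\min\{\beta,\alpha_1\}$ (or $\min\{\beta,\alpha\}$) per such node; only after charging these edges does the partial-last configuration emerge as the true minimizer (one can verify this exhaustively on the $n=4,k=3,r=2$ graph of Fig.~\ref{fig:flow2}, where the paper's claimed min-cut $\min\{2\alpha+\alpha_1,\,3\alpha_1+2\beta\}$ is indeed not undercut by the partial-first pattern). So your plan needs an explicit lemma --- proved with the full cut accounting, not the round-additive shortcut --- that among all realizable failure/$DC$ patterns the minimum cut is attained with the deficient group in the final round; without it, the proof as sketched could just as easily ``derive'' an incorrect threshold function.
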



\section{Special Cases} \label{examples}
\begin{example} \label{example 1} \normalfont
Consider a network with the following parameters: $n=4$, $k=3$, $r=2$ and $\alpha_1 = \frac{\alpha}{2}$ (Fig. \ref{fig:flow2}). According to Eq. (6) in \cite{7000553}, the MSR point is achieved by the pair $(\alpha, \gamma)=(\frac{M}{3}, M)$. However, by constructing the information flow graph as shown in Fig. \ref{fig:flow2} and finding the minimum cutset, we find that the storage point of $\frac{M}{3}$ is not feasible, and instead the storage point $\alpha = \frac{2M}{5}$ is feasible, and the corresponding repair bandwidth point is $\gamma = \frac{2M}{5}$. This is verified by Theorem \ref{theorem 2}, where $z=0$ for this example, and therefore $f'=f(z-1)=\infty$. Therefore $\alpha = \frac{M}{k}$ is not possible. Interestingly, when we substitute the value of $\alpha=\frac{2M}{5}$ into Eq. (6) of \cite{7000553}, the repair bandwidth in \cite{7000553} coincides with ours. 
\end{example}
\begin{example} \normalfont
Consider the parameters: $n=4, k=2, r=1, \alpha_1 = \frac{\alpha}{2}$. According to Eq. (6) in \cite{7000553}, the MSR point is achieved by the pair $(\alpha, \gamma)=(\frac{M}{2}, M)$. Our approach achieves $\gamma = \frac{3M}{8}$, which is achievable using linear network codes in $GF(q)$, assuming $q$ is large enough. We demonstrate the achievability of the bound for this example with the following code construction.
\paragraph{Code Construction}
Split the file into $8$ non-overlapping packets of size $\frac{M}{8}$ bits, denoted by $\mathbf{W} = (\mathbf{w}_1, \mathbf{w}_2, \ldots, \mathbf{w}_8)$. Encode these 8 packets with a $(16, 8)$ MDS code in $GF(q)$. 
The coded packets are obtained by simple matrix multiplication of the generator matrix $\mathbf{G}$ for the $(16,8)$ MDS code with $\mathbf{W}$, i.e. $\mathbf{P}=\mathbf{W} \times \mathbf{G}$. Let the $j-$th column of the matrix $\mathbf{P}$ represent the $j-$th coded packet $\mathbf{p}_j$ where $j=1, \ldots, 16$. Node $i$ where $i=1, \ldots, 4$ stores the packets corresponding to the columns from $4i-3$ to $4i$. It holds that $Rank(\mathbf{P})=8$, therefore every submatrix $\mathbf{P}'$ of $\mathbf{P}$ consisting of $8$ columns has a full rank. This implies that for any $8$- dimensional vector $\mathbf{b}$, there exist solutions $\mathbf{y}_i, i=1,2$ for the equations
\begin{align}\label{eq:span}
\mathbf{P}'_i\mathbf{y}_i=\mathbf{b} \ \ \ \ \ \ \  i=1,2
\end{align}
Without loss of generality, suppose node 1 loses two of its four packets. Thus $\alpha = \frac{M}{2}, \alpha_1=\frac{\alpha}{2}=\frac{M}{4}$. In the rest of this subsection, we describe how the transmissions from nodes $2,3$ and $4$ may be designed so that node $1$ recovers its lost packets $\mathbf{p}_1$ and $\mathbf{p}_2$ from the received packets and its remaining packets $\mathbf{p}_3$ and $\mathbf{p}_4$. \\
Create the $8\times 8$ matrix $\mathbf{P}'_1$ from the first $8$ columns of $\mathbf{P}$, and $\mathbf{P}'_2$ from the last $8$ columns of $\mathbf{P}$.
Define an $8$-dimensional vector $\mathbf{y}_1\triangleq (y_1 \ldots y_4 \ 0 \ldots 0)^{T}$. From \eqref{eq:span}, we obtain $ \mathbf{b}_1=\mathbf{P'}_1 \mathbf{y}_1$. Find $\mathbf{y}_2=\mathbf{P'}{_2^{-1}}\mathbf{b}_1=\mathbf{P'}{_2^{-1}}\mathbf{P'}_1\mathbf{y}_1$. Representing $\mathbf{y}_2$ as $(\mathbf{y}^1_2 \  \mathbf{y}^2_2)^{T}$, where $\mathbf{y}_2^1$ and $\mathbf{y}_2^2$ are the vectors containing the first four and last four elements of $\mathbf{y}_2$ respectively, then we define
\begin{align*}
\mathbf{x}_3&\triangleq \mathbf{y}^1_2 * (\mathbf{p}_9 \ldots \mathbf{p}_{12})^{T}\\
\mathbf{x}_4& \triangleq \mathbf{y}^2_2 * (\mathbf{p}_{13} \ldots \mathbf{p}_{16})^{T}.
\end{align*}
Next, set $\mathbf{b}_2=\gamma_{3}\mathbf{x}_3 + \gamma_4 \mathbf{x}_4$, for some arbitrary constants $\gamma_3, \gamma_4 \in GF(q)\  s.t. \gamma_3 \neq \gamma_4$. Solve the equation $\mathbf{P}'_1 \mathbf{y}_3=\mathbf{b}_2$ to obtain $\mathbf{y}_3$. Write $\mathbf{y}_3$ as $(\mathbf{y}^1_3 \  \mathbf{y}^2_3)^{T}$, where $\mathbf{y}_3^1$ and $\mathbf{y}_3^2$ are the vectors containing the first and last four elements of $\mathbf{y}_3$ respectively. Then we define
\begin{align*}
\mathbf{x}_2 \triangleq \mathbf{y}^2_3* (\mathbf{p}_5 \ldots \mathbf{p}_8)^{T}.
\end{align*}
Therefore, we obtain the following two linear equations:

\begin{align}
\nonumber \mathbf{P}'_1 \mathbf{y}_1&=\mathbf{x}_3 + \mathbf{x}_4\\
\mathbf{P}'_1 \mathbf{y}_3&=\gamma_{3}\mathbf{x}_3 + \gamma_4 \mathbf{x}_4. \label{eq:recovery}
\end{align}

When node $1$ loses any two packets, nodes $2,3$ and $4$ transmit the packets $\mathbf{x}_2, \mathbf{x}_3$ and $\mathbf{x}_4$ respectively. Node $1$ can solve for the two unknown packets from the set of equations \eqref{eq:recovery}. Since nodes $2,3$ and $4$ transmit one packet each of size $\frac{M}{8}$ bits, $\gamma$ is equal to $\frac{3M}{8}$.
\end{example}

\begin{example} \normalfont
Consider a network with the following parameters: $n=4$, $k=2$, $r=2$ and $\alpha_1 = \frac{\alpha}{2}$. Reference \cite{7000553} illustrates an achievable scheme for this example. It can be verified from \eqref{theorem 1} that the scheme achieves the optimal repair bandwidth , i.e., the pair is $(\alpha, \gamma)=(\frac{M}{2}, \frac{M}{2})$.
\end{example}
\section{Results and discussion}

\begin{figure} 
    \centering
    \includegraphics[scale=0.6]{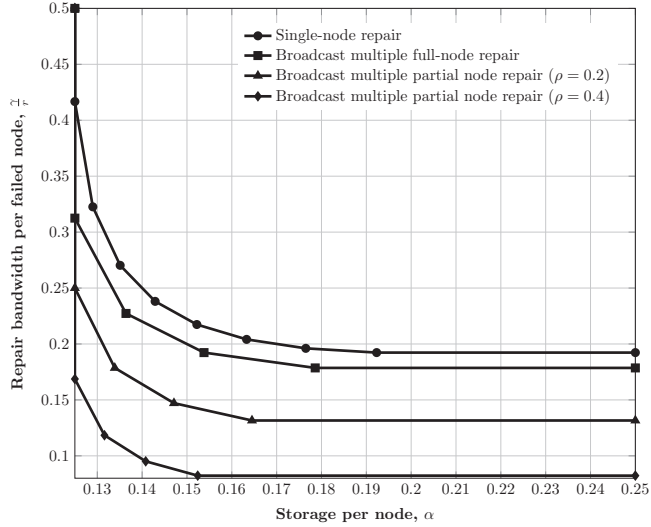}
    \caption{Trade-off curve between the repair bandwidth and storage, $M=1,k=8$ and $10$ helper nodes. For single node failure $r=1$ and for multiple node failures $r=2$.} \label{tradeoff}
\end{figure}

In Fig. \ref{tradeoff}, we plot the storage-repair bandwidth (per failed node) trade-off for single-node repair \cite{5550492}, broadcast repair of multiple full node failures \cite{7459908}, and partial repair of multiple nodes. Fig. \ref{tradeoff} illustrates that utilizing the remaining portion of data that is not lost on a failed node reduces the repair bandwidth significantly. We observe that the repair bandwidth reduces quickly for small values of storage capacity $\alpha$, and saturates at a fixed value beyond a particular threshold value of $\alpha$. That point is the MBR point. The threshold value of $\alpha$ becomes smaller for larger values of $\rho$. There is another threshold value of $\alpha$ below which repair with a finite repair bandwidth is not feasible and corresponds to the MSR point. 

\section{Conclusion} \label{conc}
In this paper, we consider the problem of repair of partial failures of multiple nodes by broadcast transmissions in a wireless medium. For this setting we derive the optimal storage-repair bandwidth trade-off curve by constructing a time-evolving information flow graph to represent the evolution of the system with time, and finding the minimum cutset across all failure combinations. Our results show that some pairs of storage and repair bandwidth values from related literature are not feasible generally. It has been shown in previous literature that compared to the single node repair, repairing multiple nodes simultaneously and exploiting the broadcast nature of the medium reduces the repair bandwidth per failed node. We illustrate that the optimal repair bandwidth is reduced even further by using the remaining content in the cache nodes that experience partial failure. Additionally, we demonstrate the achievability of the derived bounds for a special case with an explicit code construction. We can deduce that designing storage nodes as clusters of independent storage units, so that failure of a few storage units (partial failure) can be repaired using the scheme described in this paper, is more efficient.

\end{document}